\Crefname{algocf}{Algorithm}{Algorithms}
\pgfplotsset{width=10cm,compat=1.9}
\newcommand{\HL}{\ensuremath{\mathcal{H}}}
\newcommand{\MLC}{\ensuremath{\mathcal{MLC}}}
\newcommand{\ML}{\ensuremath{\mathcal{ML}}}
\newcommand{\MLi}{\ensuremath{\mathcal{ML}(\DD)}}
\newcommand{\cl}{\ensuremath{\mathsf{cl}}}
\newcommand{\iotaf}{\ensuremath{\DD}}
\newcommand{\md}{\ensuremath{\mathsf{md}}}
\newcommand{\D}{\ensuremath{\mathsf{D}}}
\newcommand{\E}{\ensuremath{\mathsf{E}}}
\newcommand{\A}{\ensuremath{\mathsf{A}}}
\newcommand{\F}{\mathcal{F}}
\newcommand{\M}{\mathcal{M}}
\newcommand{\MN}{\mathcal{N}}
\newcommand{\Z}{\mathbb{Z}}
\newcommand{\N}{\mathbb{N}}
\newcommand{\Lan}{\mathcal{L}}
\newcommand{\FR}{\mathsf{F}}
\newcommand{\Hs}{\mathcal{H}}
\newcommand{\C}{\mathit{Current}}
\newcommand{\s}{s}
\newcommand{\Pow}{\mathcal{P}}
\newcommand{\prop}{\mathsf{PROP}}
\newcommand{\nom}{\mathsf{NOM}}
\newcommand{\DD}{\ensuremath{\mathsf{DD}}}
\newcommand{\PS}{\ensuremath{\textsc{\textup{PSpace}}}}
\newcommand{\EXPT}{\ensuremath{\textsc{\textup{ExpTime}}}}
\newcommand{\NEXPT}{\ensuremath{\textsc{\textup{NExpTime}}}}
\begin{document}
\title{
Hybrid Modal Operators for Definite Descriptions}
%
%
\author{
Przemysław Andrzej Wałęga\inst{1,2}\orcidID{0000-0003-2922-0472}
\and
Michał Zawidzki\inst{1,2}\orcidID{0000-0002-2394-6056}
}

\authorrunning{P. A. Wałęga et al.}
%
\institute{University of Łódź, Poland 
\\
\and
University of Oxford, United Kingdom\\
\email{\{przemyslaw.walega,michal.zawidzki\}@cs.ox.ac.uk}
}
\maketitle              
\begin{abstract}
In this paper, we study computational complexity and expressive power 
of modal operators for definite descriptions, which correspond to statements `the modal world  which satisfies formula $\varphi$'.
We show that 
adding such operators to
the basic (propositional) modal language
has a price of increasing complexity of the satisfiability problem from \PS{} to \EXPT{}.
However, if formulas corresponding to descriptions are Boolean only, there is no increase of complexity.
Furthermore, we compare definite descriptions   with the related operators from hybrid  and counting logics.
We prove that the operators for definite descriptions are  strictly more expressive than hybrid operators, but  strictly less expressive than counting operators.
We show that over linear structures  the same expressive power results hold as in the general case;
in contrast, if the linear structures are isomorphic to integers, 
definite descriptions become as expressive as counting operators.
%
\keywords{Definite descriptions \and Modal logics \and Hybrid operators \and Counting operators \and Computational complexity \and Expressive power}
\end{abstract}

\section{Introduction}

\emph{Definite descriptions}  are term-forming expressions  such as 
`the $x$ such that $\varphi(x)$', 
which are usually represented 
with 
Peano's   $\iota$-operator
as 
$\iota x \varphi(x)$~\cite{peano1897studii}.
Such expressions intend to denote a single object satisfying a property $\varphi$,
but
providing a complete formal theory for them turns out to be a complex task
due to several non-intuitive
cases, for example, when there exists  no object satisfying  $\varphi$, 
when there are multiple such objects,
or when a formula with a definite description is in the scope of negation. 
As a result,  a number of competing 
theories have been proposed~\cite{HilBer68,Rosser78,Scott67,Lambert2001,benc:free86},
including Russell's famous 
approach according to which 
the underlying logical form of a sentence 
`$\iota x \varphi(x)$ satisfies $\psi$'
is that 
`there exists exactly one $x$ which satisfies $\varphi$ and 
moreover this $x$ satisfies  $\psi$'~\cite{PelLin2005}.

More recently it has been observed that definite descriptions, and referring expressions in general,
provide a convenient way of identifying objects  in information and knowledge base management systems~\cite{borgida2016referring,artale2021free}.
Such expressions can be used to replace obscure identifiers~\cite{borgida2016referring,borgida2017concerning},
enhance query answering~\cite{toman2019finding},
identify problems in conceptual modelling~\cite{borgida2016referring2},
and
identity resolution in ontology-based data access~\cite{toman2018identity,toman2019identity}.
For this reason referring expressions have been studied in the setting of description logics (DLs)
\cite{areces2008referring,toman2019finding,Mazzullo2022DD}---well-known formalisms for ontologies and the Semantic Web.
In particular, Neuhaus et al.
\cite{neuhaus2020free}
introduced free DLs (free of the presupposition that each term denotes) with three alternative 
dual-domain semantics:  positive, negative, and gapping, where statements in ABoxes and TBoxes involving non-referring expressions can still be true, become automatically false, or lack a truth value, respectively.
Artale et al. \cite{artale2021free}, in turn, 
proposed free DLs using single domain semantics;
they introduced
definite descriptions in DLs by 
allowing for expressions of the form $\{ \iota C \}$, whose extension is a singleton containing the unique element of which a (potentially complex) concept $C$ holds, or the empty set if there does not exist  such a unique element.
%
%
Definite descriptions can therefore be seen as a generalisation of nominals, which in DLs take the form $\{a\}$ with $a$ being an individual name. Since Artale et al. do not assume that all individual names must refer, a nominal $\{a\}$ with $a$ being a non-referring name, denotes the empty set.
As shown by  Artale et al.  \cite{artale2021free},
definite descriptions can be simulated  in DLs with nominals and the universal role.
In particular, adding definite descriptions to $\mathcal{ALCO}_u$ (i.e., $\mathcal{ALC}$ with nominals and the universal role) does not increase the computational complexity of checking ontology satisfiability, which remains
 \EXPT{}-complete.


In modal logics nominals are treated as specific atoms which must hold in single modal worlds \cite{prior1967past,bull1970approach,blackburn1993nominal,gargov1993modal}.
Satisfaction operators $@_i$, in turn, are indexed with nominals $i$ and allow us to write formulas such as $@_i \varphi$, whose meaning is that $\varphi$ holds in the unique modal world in which nominal $i$ holds (but $\varphi$ can also hold in other worlds).
Nominals and satisfaction operators constitute the standard \emph{hybrid machinery}, which added to the basic modal logic gives rise to the hybrid logic $\HL(@)$~\cite{areces200714,Blackburn2000}.
Such a machinery increases the expressiveness of the basic modal logic by making it possible, for example, to  encode irreflexivity or atransitivity of the accessibility relation.
At the same time the computational complexity of the satisfiability problem in $\HL(@)$ remains  \PS{}-complete, so the same as in the basic modal logic~\cite{areces1999road}. 
On the other hand, introducing further  hybrid operators or considering temporal hybrid logics oftentimes has a drastic impact on the computational complexity~\cite{areces200714,ten2005complexity,areces1999road,areces2000computational,franceschet2003hybrid}.

Closely related are also the \emph{difference} $\D$ and the \emph{universal} $\A$ modalities.
Adding any of them to the basic modal language makes the satisfiability problem \EXPT{}-complete~\cite{blackburn2002modal}.
It is  not hard to show that $\D$ allows us to express nominals and satisfaction operators;
what is more interesting, however, is that the basic modal logic with $\D$ is equivalent to the hybrid modal logic with $\A$~\cite{gargov1993modal}.
Furthermore, one can observe that
having access to both $\A$ and nominals enables to  express definite descriptions by marking with a nominal the unique world in which the definite description holds, and using $\A$ to state that this description holds only it the world satisfying this nominal (as observed by Artale et al.~\cite{artale2021free}).

Uniqueness of a world  can also be expressed in the \emph{modal logic with counting} \MLC{}, which extends the basic modal language with counting operators of the form $\exists_{\geq n}$, where  $n \in \N$, and $\exists_{\geq n} \varphi$ states that $\varphi$ holds in at least $n$ distinct worlds~\cite{areces2010modal}.
Using Boolean connectives and $\exists_{\geq n}$ enables to also express the counting operators of the forms $\exists_{\leq n}$ and $\exists_{=n}$.
Such operators can be used to encode the hybrid machinery, as well as $\A$ and $\D$, but this comes at a considerable complexity cost. 
In particular, the satisfiability problem in \MLC{} is \EXPT{}-complete if numbers $n$ in counting operators are encoded in unary~\cite{tobies2001complexity} and it is \NEXPT{}-complete if the numbers are encoded in binary~\cite{Zawidzki2013,pratt2005complexity,pratt2010two}.

In contrast to the extensive studies of hybrid and counting modal operators, as well as definite descriptions in first-order modal logics \cite{fitting2012first,indrzejczak2018cut,orlandelli2018labelled,indrzejczak2020existence},
definite descriptions have not been thoroughly analysed in  propositional modal logics, which we address in this paper.
To this end, we consider the basic modal language and  extend it with a (hybrid) modal operator for definite descriptions $@_{ \varphi}$ which can be indexed with an arbitrary modal formula $\varphi$. 
The intuitive meaning of $@_\varphi \psi$ is that $\psi$ holds in the unique world in which $\varphi$ holds.
Our goal is to determine the computational cost of adding such definite descriptions to the language, and to investigate the expressive power of the obtained logic, denoted as \MLi{}.

The main contributions of this paper are as follows:
\begin{enumerate}
\item We show that 
adding to the basic modal language definite descriptions $@_{\varphi}$ with Boolean $\varphi$  (so $\varphi$ does not mention modal operators) can be done with no extra computational cost.
In other words, satisfiability of \MLi{}-formulas with Boolean definite descriptions is \PS{}-complete.
The main part of the proof is to show the upper bound by reducing \MLi{}-satisfiability to the existence of a winning strategy in a specific game played on Hintikka sets.

\item On the other hand, if we allow for arbitrary $\varphi$'s in definite descriptions, the satisfiability problem becomes \EXPT{}-complete.
Thus, the computational price of adding non-Boolean definite descriptions is the same as for adding the universal modal operator $\A$ or counting operators $\exists_{\geq n}$ with numbers $n$ encoded in unary.
The important ingredient of the  proof is showing the lower bound by reducing satisfiability in the basic modal logic with the universal modality $\A$ to \MLi{}-satisfiability.

\item We show that, over the class of all frames, \MLi{} is strictly more expressive than $\HL(@)$, but strictly less expressive than \MLC{}.
In particular, 
\MLC{} can define frames with domains of cardinality $n$, for any $n\in\N$. On the other hand, the only frame cardinality 
\MLi{} can define is $1$,
and $\HL(@)$ cannot define any frame properties related to cardinality.

\item We prove that over linear frames the same expressiveness results hold as for arbitrary frames, but over the integer frame \MLi{} becomes as expressive as \MLC{}.
In particular, over such a frame the operators $\exists_{\geq n}$ become expressible in \MLi{}, which is still not the case for $\HL(@)$.
\end{enumerate}
The rest of the paper is organised as follows.
In \Cref{logic} we present \MLi{} formally.
We obtain its syntax by extending the basic modal logic with definite description operators $@_{\varphi}$ and we provide the semantics for these operators exploiting the standard Russellian theory of definite descriptions.
We also present $\HL(@)$ and \MLC{}, which are considered in the later parts of the paper.
In \Cref{complexity} we prove both of our computational complexity results, namely tight \PS{} and \EXPT{} bounds.
Then, in \Cref{expressiveness} we turn our attention to expressive power;
we define notions used to compare the expressive power of the logics in question and present a variant of bisimulation which is adequate for \MLi{}.
We show results that hold over arbitrary and linear frames, and we finish with results that hold over integers.
Finally, we briefly conclude the paper in \Cref{conclusions}.

\section{Logic of Definite Descriptions and Related Formalisms}\label{logic}

In what follows, we  introduce formally the modal logic of definite descriptions \MLi{}
and present closely related logics which were studied in the literature.

We let formulas of
$\MLi$ be defined as in the basic modal logic, but we additionally allow for using the operator $@$ to construct formulas of the form $@_{ \varphi} \psi$ whose intended meaning is that formula $\psi$ holds in the unique world in which formula $\varphi$ holds.

Formally, $\MLi$-formulas 
are generated by the  grammar
\begin{align*}
\varphi & \Coloneqq
p \mid
\neg \varphi \mid
\varphi \lor \varphi \mid
\Diamond \varphi \mid
@_{ \varphi} \varphi ,
\end{align*}
where $p$ ranges over the set $\prop$ of propositional variables.
We refer to an expression $@_{ \varphi}$ as a \emph{definite description}---\DD{} in short---and we call it \emph{Boolean} 
if so is $\varphi$ (i.e., $\varphi$ does not mention $\Diamond$ or $@$).
We will also use $\bot$, $\top$, $\land$, $\to$, and $\Box$, which stand for the usual abbreviations.
We let $\prop(\varphi)$ be the set of propositional variables occurring in $\varphi$ and 
the \emph{modal depth}, $\md(\varphi)$, of $\varphi$  the 
deepest nesting of $\Diamond$ in $\varphi$.

We will consider the Kripke-style semantics of \MLi{}, where a \emph{frame} is a pair ${\F = (W, R)}$ consisting of a non-empty set  $W$ of worlds and 
an accessibility relation $R \subseteq W \times W$. 
A  \emph{model} based
on a frame ${ \F= (W, R) }$ is a tuple ${\M = (W, R, V )}$, 
where
$V : \prop \longrightarrow \Pow(W)$ is a valuation assigning a set of worlds to each propositional
variable.
The \emph{satisfaction relation} $\models$ for  $\M = (W, R, V )$ and 
$w \in W$ is defined inductively as follows:
\begin{align*}
& \M,  w  \models  p  && \text{iff} &&   w \in V(p), \text{ for each }  p \in \prop
\\
& \M,  w  \models  \neg  \varphi  && \text{iff} &&  \M,  w   \not \models   \varphi
\\
& \M,  w  \models \varphi_1 \lor \varphi_2  && \text{iff} && \M,  w  \models   \varphi_1 \text{ or } \M,  w  \models    \varphi_2
\\
& \M,  w  \models \Diamond  \varphi && \text{iff} && \text{there exists } v \in W \text{ such that } (w,v) \in R \text{ and }  \M,  v  \models   \varphi
\\
& \M,  w  \models @_ { \varphi_1}  \varphi_2  && \text{iff} && \text{there exists } v \in W \text{ such that } \M,  v  \models   \varphi_1  , \M,  v  \models   \varphi_2  
\\ 
& && && \text{and  } \M,  v'  \not\models   \varphi_1 \text{ for all } v' \neq v \text{ in } W
\end{align*}
We say that $\varphi$ is \emph{satisfiable} if there exist $\M$ and $w$ such that $\M, w \models \varphi$;
we will focus on checking satisfiability as the main reasoning task.

It is worth observing that \MLi{} allows us to naturally express definite descriptions with both the \emph{external} and \emph{internal} negation.
The first type of negation corresponds to sentences of the form `it is not the case that the $x$ such that $\varphi$ satisfies $\psi$' which can be written as $\neg @_{ \varphi} \psi$.
The internal negation occurs in sentences of the form `the $x$ such that $\varphi$ does not satisfy $\psi$', which can be expressed in \MLi{} as
$@_{ \varphi} \neg \psi$.



Next, we  present well-studied extensions of the basic modal language
which are particularly relevant for investigating \MLi{}, namely the  logic $\MLC$ with counting operators $\exists_{\geq n}$, 
with any $n \in \N$~\cite{areces2010modal,areces2000computational}, 
and the logic $\HL(@)$ with hybrid  operators $@_i$, where $i$ is a nominal (i.e., an atom which holds in exactly one world)~\cite{areces200714,areces1999road}.
The intended reading of
$\exists_{\geq n } \varphi$ is that $\varphi$ holds in at least $n$ distinct worlds, whereas
$@_i \varphi$ is that $\varphi$ holds in the unique world labelled by $i$.

Formally, $\MLC$-formulas are generated by the grammar
\begin{align*}
\varphi & \Coloneqq
p \mid
\neg \varphi \mid
\varphi \lor \varphi \mid
\Diamond \varphi \mid
\exists_{\geq n} \varphi,
\end{align*}
where $p \in \prop$ and $n \in \N$.
We will also use  $\exists_{\leq n} \varphi$ as an abbreviation for $\neg \exists_{\geq n +1} \varphi$ and $\exists_{=n} \varphi$ as an abbreviation for $\exists_{\geq n} \varphi \land \exists_{\leq n} \varphi$.
The semantics of \MLC{} is obtained by extending the basic modal logic semantics with the condition
\begin{align*}
& \M,  w  \models \exists_{\geq n}  \varphi  && \text{iff} && \text{there are at least } n \text{ worlds } v\in W \text{ such that } \M,  v  \models   \varphi
\end{align*}
Formulas of $\HL(@)$, in turn, are generated by the grammar
\begin{align*}
\varphi & \Coloneqq
p \mid
i \mid
\neg \varphi \mid
\varphi \lor \varphi \mid
\Diamond \varphi \mid
@_i \varphi,
\end{align*}
for $p \in \prop$ and $i$ belonging to the set $\nom$ of \emph{nominals}.
The semantics of $\HL(@)$ exploits
\emph{hybrid models}   $\M = (W, R, V )$ which are defined like standard modal models except that
$V : \prop \cup \nom \longrightarrow \Pow(W)$  assigns not only sets of worlds to  propositional
variables, but also singleton sets to nominals.
Then the conditions of the satisfaction relation are extended with
\begin{align*}
& \M,  w  \models  i  && \text{iff} &&  V(i) = \{w\}, \text{ for each } i \in \nom
\\
& \M,  w  \models @_i  \varphi && \text{iff} && \M,  v  \models   \varphi, \text{ for } v  \text{ such that }V(i) = \{v\}
\end{align*}

We can already observe some relations between definite descriptions $@_{  \varphi}$, the counting operator $\exists_{=1}$, and satisfaction operators $@_i$.
For example, $@_{  \varphi} \psi$ can be expressed as $\exists_{=1} \varphi \land \exists_{=1} (\varphi \land \psi)$, which states that $\varphi$ holds in a single world and that $\psi$ also holds in this world.
On the other hand we can simulate a nominal $i$ with a propositional variable $p_i$ by writing a formula $@_{ p_i} \top$, which guarantees the existence of the unique world in which $p_i$ holds.
Then $@_i \varphi$ can be simulated as $@_{ p_i} \varphi$;
note that for the latter simulation we use only Boolean \DD{}s. 
In the following sections we will study the relation between logics with these operators in detail.
In particular, we will aim to determine how the complexity and expressiveness of \MLi{} compares to the ones of the related logics.


\section{Computational Complexity}\label{complexity}

In this section, we investigate the computational complexity of the satisfiabily problem in \MLi{}.
First, we show that if we allow for Boolean \DD{}s only, the problem is \PS{}-complete, that is, the same as in the language without \DD{}s;
hence,  extending the language in this way can be performed with no computational cost.
However, in the second result we show that in the case of arbitrary \DD{}s the problem becomes \EXPT{}-complete, and so, the computational price of adding \DD{}s is the same as for adding 
counting quantifiers (with numbers encoded in unary)~\cite{tobies2001complexity,areces2000computational} or for adding the universal modality~\cite{blackburn2002modal}.

We start by showing \PS{}-completeness of the satisfiability problem in the case of Boolean \DD{}s.
The lower bound follows trivially from \PS{}-completeness of the same problem in basic modal logic~\cite{ladner1977computational,blackburn2002modal}.
For the upper bound, we show that the problem reduces to checking the existence of a winning strategy in a specific two-player game.
States of this game can be represented in polynomial space, and so, we can check the existence of a winning strategy in \PS{}.
It is worth observing that a similar technique was used to show the \PS{} upper bound for $\HL(@)$~\cite{areces1999road} and for
modal logics of topological spaces with the universal modality~\cite{sustretov2007topological}.



Our game for  checking if an input formula $\varphi$ is satisfiable will be played using $\varphi$-Hintikka sets defined as follows.

\begin{definition}
We let the \emph{closure}, $\cl(\varphi)$, of an $\MLi$-formula $\varphi$ be the minimal set of formulas which contains all subformulas of $\varphi$, and such that if $\psi \in \cl(\varphi)$ but $\psi$ is not of the form   $\neg \chi$, then $\neg \psi \in \cl(\varphi)$. 
A
\emph{$\varphi$-Hintikka set} $H$
is any maximal subset of $\cl(\varphi)$ which satisfies  the following conditions, for all $\psi,\psi_1, \psi_2 \in H$:
\begin{itemize}
\item if  $\neg \psi \in \cl(\varphi)$, then $\neg \psi \in H$ if and only if $\psi \not\in H$,

\item if  $\psi_1 \lor \psi_2 \in \cl(\varphi)$, then $\psi_1 \lor \psi_2 \in H$ if and only if $\psi_1 \in H$ or $\psi_2 \in H$.
\end{itemize}

\end{definition}
For example, if $\varphi$ is of the form 
$@_{  \neg ( p \lor \neg p) }$
then
$\{ p, (p \lor \neg p),  @_{  \neg ( p \lor \neg p) } \}$ constitutes a $\varphi$-Hintikka set.
Note that although $\varphi$-Hintikka sets are consistent with respect to Boolean connectives, they do not need to be consistent (i.e., satisfiable) in general;
indeed,  $@_{  \neg ( p \lor \neg p) }$ in the set above is unsatisfiable.

Given the definition of a $\varphi$-Hintikka set we are ready to present the game.
To this end, we will use the symbol $\iotaf(\varphi)$ to represent the set
 of all formulas $\psi$ such that $@_{ \psi}$ occurs in $\varphi$.

\begin{definition}
For an $\MLi$-formula $\varphi$ we let
the $\varphi$-game 
be played between Eloise and Abelard as follows.
In the first turn Eloise needs to provide a set
$\Hs$
of at most $|\iotaf(\varphi) | +1$ $\varphi$-Hintikka sets and  
a relation $R \subseteq \Hs \times \Hs$
such that:
\begin{itemize}
\item 
$\varphi \in H$, for some $H \in \Hs$,

\item each $\psi \in \iotaf(\varphi)$ can occur in at most one $H \in \Hs$,

\item for all $@_{ \psi} \chi \in \cl(\varphi)$ and  $H \in \Hs$  we have $@_{ \psi} \chi \in H$ iff there is $H' \in \Hs$ such that $\{ \psi , \chi \}  \subseteq H'$,

\item and for all $\Diamond \psi \in \cl(\varphi)$,
if $R(H,H')$  and $\psi \in H'$, then $\Diamond \psi \in H$.
\end{itemize}
Then Abelard and Eloise play in turns.
Abelard selects  $H \in \C$ (initially $\C = \Hs$) and  a formula $\Diamond \varphi' \in H$, which he wants to verify.
This $\Diamond \varphi'$ needs to have the modal depth  not larger than $\md(\varphi)$ decreased by the number of turns Abelard already played.
Then it is Eloise's turn in which she needs to provide a witnessing $\varphi$-Hintikka set 
$H'$ such that
\begin{itemize}
\item $\varphi' \in H'$,

\item if $H' \cap \iotaf(\varphi) \neq \emptyset$, then  $H' \in \Hs$,

\item for all $@_{ \psi} \chi \in \cl(\varphi)$ we have $@_{ \psi} \chi \in H'$ iff there is $H'' \in \Hs $ such that $\{ \psi , \chi \}  \subseteq H''$,

\item and for all $\Diamond \psi \in \cl(\varphi)$, if $\psi \in H'$, then $\Diamond \psi \in H$.
\end{itemize}
If $H' \cap \iotaf(\varphi) \neq \emptyset$, then Eloise wins.
Otherwise the game continues with Abelard's  turn in which $H'$ is added to $\Hs$ and  the set $\C$ becomes $\{ H' \}$.
When one of the players cannot make any move, the game ends and this player loses.
\end{definition}

We observe that a $\varphi$-game needs to terminate, as Abelard can play at most $\md(\varphi)+1$ turns.
Moreover, we show next that
verifying the satisfiability of $\varphi$ reduces to checking the existence of Eloise's winning strategy in the $\varphi$-game.

\begin{lemma}\label{lem:game}
For any $\MLi$-formula $\varphi$ with Boolean \DD{}s, $\varphi$ is satisfiable if and only if  Eloise has a winning strategy in the $\varphi$-game.
\end{lemma}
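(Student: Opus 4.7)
The plan is to prove both directions of the equivalence through a standard correspondence between Kripke models and winning strategies, with one crucial twist: since all \DD{}s in $\varphi$ are Boolean, every $\psi \in \iotaf(\varphi)$ is a Boolean combination of propositional variables, so the truth of $\psi$ at a world $v$ depends only on the propositional atoms in the $\varphi$-Hintikka set associated with $v$. This observation is what lets us track \DD{}-witnesses purely by Hintikka sets on both sides of the argument.

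For the direction from satisfiability to winning strategy, I would start from a model $\M=(W,R,V)$ with $\M,w\models\varphi$ and, for each $v\in W$, take $H_v$ to be the set of formulas from $\cl(\varphi)$ true at $v$. Eloise's opening move $\Hs$ consists of $H_v$ for every $v$ at which some $\psi\in\iotaf(\varphi)$ holds, together with $H_w$. The uniqueness clause in the semantics of $@_\psi$ caps $|\Hs|$ at $|\iotaf(\varphi)|+1$ and, together with the Boolean nature of $\psi$, guarantees that no formula from $\iotaf(\varphi)$ sits in two distinct members of $\Hs$. The relation $R$ on $\Hs$ is inherited from $\M$; checking the four initial conditions is then routine from the semantics. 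In subsequent turns, when Abelard picks some $H=H_v$ and $\Diamond\varphi'\in H$, Eloise locates any witness $v'\in W$ of $\Diamond\varphi'$ at $v$ in $\M$ and answers with $H_{v'}$. If $H_{v'}$ contains a \DD-formula then by uniqueness $v'$ was already marked, so $H_{v'}\in\Hs$ and Eloise wins immediately; otherwise the play continues, with the modal depth of the next admissible challenge strictly smaller.

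For the converse, I would unravel Eloise's winning strategy into a finite tree-like model. The domain starts with $\Hs$ and its relation $R$; then for every sequence of Abelard's legal moves I adjoin fresh copies of the Hintikka sets that Eloise produces along that branch, drawing an edge from each challenged $H$ to the witness $H'$ she returns. Since a play has at most $\md(\varphi)+1$ turns and the branching at each step is bounded by the number of diamonds in the current set, the construction terminates. The valuation sends each $p\in\prop$ to the set of worlds whose Hintikka set contains $p$. I would then prove a truth lemma by induction on $\chi\in\cl(\varphi)$ stating $\M,H\models\chi$ iff $\chi\in H$: the Boolean cases fall out of the Hintikka conditions, while the $\Diamond$ case uses the upward-closure conditions of the game in one direction and the existence of witnesses produced in response to Abelard's challenges in the other.

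The main obstacle, and the place where the Boolean restriction is indispensable, is the $@_{\psi}\chi$ case of the truth lemma, where I must verify that $\psi$ has exactly one witness in the constructed model and that this witness also satisfies $\chi$. Any world satisfying $\psi$ has, by Booleanness, a Hintikka set containing $\psi$; by the game rules such a set must lie in $\Hs$, because any fresh Hintikka set produced during play is required to satisfy $H'\cap\iotaf(\varphi)=\emptyset$ (else Eloise would have won and the branch stopped being extended). Combined with the condition that each $\psi\in\iotaf(\varphi)$ occurs in at most one member of $\Hs$, this pinpoints the unique candidate, and the witnessing condition on $@_{\psi}\chi$ in the game then yields the desired equivalence $\M,H\models @_{\psi}\chi$ iff $@_{\psi}\chi\in H$, completing the truth lemma and hence the proof.
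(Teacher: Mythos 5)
Your overall plan coincides with the paper's: the forward direction reads Hintikka sets off a satisfying model, and the backward direction unravels Eloise's strategy into a model and establishes a truth lemma, with the Booleanness of the descriptions doing the work in the $@_{\psi}\chi$ case. However, there is a genuine gap in your truth lemma. You state it in unrelativised form, ``$\M,H\models\chi$ iff $\chi\in H$ for all $\chi\in\cl(\varphi)$'', and this is false for the model you build. Abelard's challenges are bounded in modal depth (at turn $k$ he may only pick $\Diamond\varphi'$ with $\md(\Diamond\varphi')\le\md(\varphi)-k$) and a play lasts at most $\md(\varphi)+1$ turns, so a Hintikka set sitting at depth $k$ of your tree may contain a formula $\Diamond\psi$ of larger modal depth that Eloise's winning strategy never had to witness; at the leaves such diamonds have no successors at all, and the left-to-right direction of your lemma breaks. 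The paper's proof avoids exactly this by stratifying the worlds into levels $\Hs_0,\dots,\Hs_{\md(\varphi)}$ and proving the equivalence $\M,w_k^H\models\psi$ iff $\psi\in H$ only for $\psi$ with $\md(\psi)\le\md(\varphi)-k$, which still suffices because $\varphi$ itself sits at level $0$. Your argument needs the same relativisation; without it the induction in the $\Diamond$ case cannot be closed.

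A second, smaller problem is in the forward direction. You let Eloise open with $H_v$ for \emph{every} world $v$ at which some $\psi\in\iotaf(\varphi)$ holds and justify the bound $|\Hs|\le|\iotaf(\varphi)|+1$ by ``the uniqueness clause in the semantics of $@_\psi$''. That clause only applies to descriptions that are \emph{true} in the model; if $@_{\psi}\chi$ occurs under negation and is false because $\psi$ holds at several worlds, your opening move can exceed the cardinality cap and place $\psi$ in two distinct members of $\Hs$, violating the second condition of the game. Eloise has to be selective here (at most one designated set per $\psi$, chosen to make the third condition of the opening move come out right), rather than taking all $\psi$-worlds; as written, this step of your strategy is not a legal move. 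Also note that when Eloise's response intersects $\iotaf(\varphi)$ it is required to already lie in $\Hs$, so you must link the challenged world back to the existing $\Hs$-world rather than adjoin a fresh copy, otherwise you duplicate a description witness and lose the uniqueness you need in the $@_{\psi}\chi$ case; the paper sidesteps this by excluding $\Hs_0$-sets from the later levels and defining $R$ maximally on all coherent pairs.
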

\begin{proof}
If $\varphi$ is satisfiable, then Eloise can construct a winning strategy by reading the required $\varphi$-Hintikka sets from a model of $\varphi$.
For the opposite direction, assume that Eloise has a winning strategy that starts by playing $\Hs_0 = \{H_0, \dots, H_n \}$.
We define  $\Hs_1, \dots, \Hs_{\md(\varphi)}$ such that each $\Hs_{k+1}$ is the set of
all $\varphi$-Hintikka sets not belonging to $\Hs_0$ which Eloise would play (using the winning strategy)
as a response to Abelard having played some set (and a formula) in $\Hs_k$.
We exploit these $\Hs_0, \dots, \Hs_{\md(\varphi)}$ to construct a model $\M=(W,R,V)$  such that 
\begin{align*}
W & = \{ w_k^H \mid k \in \{0, \dots, \md(\varphi) \} \text{ and } H \in \Hs_k \},
\\
R & = \{ (w_k^H, w_{k'}^{H'}) \in W \times W  \mid \psi \in {H'} \text{ implies } \Diamond \psi \in H, \text{ for all } \Diamond \psi \in \cl(\varphi) \},
\\
V(p) & = \{w_k^H \in W \mid p \in H \}, \quad \text{ for each } p \in \prop.
\end{align*}
We can show by induction on the structure of formulas
that for any 
$w_k^H \in W$
and any 
$\psi \in \cl(\varphi)$  with $\md(\psi) \leq \md(\varphi) -k $  it holds that $\M, w_k^H \models \psi$ if and only if $\psi \in H$.
Thus, $\M, w_0^{H}\models \varphi$, for $H \in \Hs_0$ such that $\varphi \in H$ (which needs to exist by the definition of the $\varphi$-game). 
\qed
\end{proof}

We observe that each state of the $\varphi$-game can be represented in polynomial space with respect to the size of $\varphi$.
In particular, in each state we need to specify a set of polynomially many $\varphi$-Hintikka sets played so far, each containing polynomially many formulas, which in total uses polynomial space.
The existence of a winning strategy for Eloise can therefore be decided in \PS{} (e.g., by exploiting the fact that \PS{} coincides with the class of problems decided by alternating Turing machines in polynomial time~\cite{chandra1981alternation}).

\begin{theorem}\label{pspace}
Checking satisfiability of \MLi{}-formulas with Boolean \DD{}s is \PS{}-complete.
\end{theorem}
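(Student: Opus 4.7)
The plan is to combine the two directions, the lower bound being essentially free and the upper bound following from Lemma~\ref{lem:game} via an alternating polynomial-time simulation of the $\varphi$-game.

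For PSPACE-hardness, I would simply appeal to Ladner's classical result: satisfiability in the basic modal language is PSPACE-complete, and every such formula is syntactically an $\MLi$-formula with no \DD{}s at all (hence trivially Boolean \DD{}s). So the lower bound transfers without any additional work.

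For the upper bound I would use the characterisation $\textsc{PSpace} = \textsc{AP}$ (alternating polynomial time). By Lemma~\ref{lem:game}, it suffices to decide in alternating polynomial time whether Eloise has a winning strategy in the $\varphi$-game. I would describe an alternating procedure that plays the game move by move: first Eloise existentially guesses her opening pair $(\Hs,R)$, after which the machine deterministically verifies in polynomial time that $\Hs$ consists of at most $|\iotaf(\varphi)|+1$ $\varphi$-Hintikka sets over $\cl(\varphi)$ and that it meets the four opening conditions (containing $\varphi$, pairwise disjointness on $\iotaf(\varphi)$, correct treatment of every $@_{\psi}\chi \in \cl(\varphi)$, and coherence with $\Diamond$-formulas along $R$). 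Each such check is a polynomial-time scan over $\cl(\varphi)$. Thereafter the computation alternates: universally guess Abelard's move (a set $H \in \C$ and a formula $\Diamond\varphi' \in H$ whose modal depth respects the turn counter), then existentially guess Eloise's witness $H'$, verifying the four witnessing conditions in polynomial time, and update $\C$ (either win if $H' \cap \iotaf(\varphi) \neq \emptyset$, or set $\C := \{H'\}$ and add $H'$ to $\Hs$).

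The crucial point is that every configuration can be recorded in polynomial space: $\cl(\varphi)$ has size $O(|\varphi|)$, each $\varphi$-Hintikka set is a subset of it, $|\Hs| \leq |\iotaf(\varphi)|+1 = O(|\varphi|)$, and $R$ is a relation on $\Hs$. The key bound on alternation depth is supplied by the paper's own observation: Abelard plays at most $\md(\varphi)+1$ rounds because the modal depth of his chosen $\Diamond\varphi'$ strictly decreases, so the entire alternating computation halts after $O(|\varphi|)$ alternations. Combining polynomial configuration size with polynomial alternation depth and polynomial-time local verification puts the problem in \textsc{AP} and therefore in \PS{}, matching the lower bound. The main delicate point in carrying this out is bookkeeping the set $\Hs$ of previously committed Hintikka sets so that the third witnessing clause (consistency of every $@_{\psi}\chi$ with the current $\Hs$) and the disjointness-on-$\iotaf(\varphi)$ invariant are both preserved as new $H'$ are appended; but since $|\Hs|$ stays polynomial throughout and each condition is a local scan, this bookkeeping remains within polynomial space.
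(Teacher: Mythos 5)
Your proposal is correct and follows essentially the same route as the paper: the lower bound is inherited from Ladner's result for the basic modal language, and the upper bound combines Lemma~\ref{lem:game} with the fact that the $\varphi$-game has polynomial-size configurations and at most $\md(\varphi)+1$ rounds, so Eloise's winning strategy can be decided in alternating polynomial time, hence in \PS{}. You merely spell out the alternating simulation and the bookkeeping in more detail than the paper does, which is a harmless elaboration rather than a different argument.
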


Importantly, \Cref{pspace} does not hold if we allow for non-Boolean \DD{}s, which disallows us to conduct the induction from the proof of
\Cref{lem:game}. 
As we show next, this is not a coincidence, namely the satisfiability problem for \MLi{} with non-Boolean \DD{}s is \EXPT{}-complete.

The \EXPT{} upper bound follows from an observation that \DD{}s can be simulated with the counting operator $\exists_{=1}$;
recall that we can simulate $@_{ \varphi} \psi$ with $\exists_{=1} \varphi \land \exists_{=1} (\varphi \land \psi)$.
As we use only one counting operator $\exists_{=1}$ and \MLC{}-satisfiability with numbers encoded in unary is \EXPT{}-complete~\cite{tobies2001complexity,areces2000computational}, our upper bound follows.
The proof of the matching  lower bound
is more complicated and is obtained by simulating the universal modal operator 
$\A$ with \DD{}s, where $\A \varphi$ stands for  `$\varphi$ holds in all worlds'.
To simulate $\A$  we start by
guaranteeing that there exists a unique `trash'  world in which a special propositional variable $\s$ holds and which is accessible with $\Diamond$ only from itself; this can be obtained by the formula $@_{\s}\top \land @_{ \Diamond \s} \s$.
Now, we can use this world to simulate $\A \varphi$ with
$@_{ (\s \lor \neg \varphi) }\top$, which states that $\varphi$ holds in all worlds in which $\s$ does not hold, that is, in all worlds different from our `trash' world.
Although this does not allow us to express the exact meaning of $\A \varphi$, it turns out to be sufficient to reduce satisfiability of  formulas of the logic $\ML(\A)$ with the $\A$ operator to \MLi{}-satisfiability.
As the former problem is \EXPT{}-complete~\cite{blackburn2002modal}, we obtain the required lower bound.


\begin{theorem}\label{expt}
Checking satisfiability of \MLi{}-formulas (with arbitrarily complex \DD{}s) is \EXPT{}-complete.
\end{theorem}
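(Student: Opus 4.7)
The plan is to handle the two bounds separately, following the sketch already hinted at in the excerpt. For the upper bound, I would verify that the translation $@_{\varphi}\psi \mapsto \exists_{=1}\varphi \land \exists_{=1}(\varphi \land \psi)$ from \MLi{} into \MLC{} is correct (a straightforward unfolding of the semantics) and that it produces only counting quantifiers of the form $\exists_{=1}$, which unfold into a Boolean combination of $\exists_{\geq 1}$ and $\exists_{\geq 2}$, both with indices encoded in unary. Since \MLC{}-satisfiability is \EXPT{}-complete under unary encoding~\cite{tobies2001complexity,areces2000computational}, this yields the \EXPT{} upper bound directly; the translation is moreover polynomial.

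For the lower bound, I would reduce satisfiability in $\ML(\A)$, which is \EXPT{}-hard~\cite{blackburn2002modal}, to \MLi{}-satisfiability. Fixing a fresh propositional variable $\s$ that does not occur in the input formula $\varphi$, define the translation $\tau$ recursively by
\begin{align*}
\tau(p) &= p, \\
\tau(\neg \psi) &= \neg \tau(\psi), \\
\tau(\psi_1 \lor \psi_2) &= \tau(\psi_1) \lor \tau(\psi_2), \\
\tau(\Diamond \psi) &= \Diamond \tau(\psi), \\
\tau(\A \psi) &= @_{\s \lor \neg \tau(\psi)} \top,
\end{align*}
and set $f(\varphi) = \neg \s \land \tau(\varphi) \land @_\s \top \land @_{\Diamond \s}\s$. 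The two final conjuncts enforce the existence of a unique ``trash'' world $w_\s$ at which $\s$ holds and, moreover, that $w_\s$ is the only world from which $w_\s$ is accessible, which in particular means $w_\s$ is reflexive and otherwise isolated from the rest of the model.

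The heart of the argument is showing that $\varphi$ is $\ML(\A)$-satisfiable iff $f(\varphi)$ is \MLi{}-satisfiable. For the forward direction, given an $\ML(\A)$-model $\MN, w \models \varphi$, adjoin a single fresh reflexive world $w_\s$ with $\s$ true only there and no other incident edges, obtaining an \MLi{}-model $\M$ that witnesses $f(\varphi)$. For the backward direction, starting from $\M, w \models f(\varphi)$, the two \DD{} axioms uniquely identify and isolate the trash world, so deleting it yields an $\ML(\A)$-model $\MN$ on which one proves by induction on $\psi$ that, for every $v \neq w_\s$, $\M, v \models \tau(\psi)$ iff $\MN, v \models \psi$. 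The crucial case is $\psi = \A\chi$: the world $w_\s$ satisfies the description $\s \lor \neg \tau(\chi)$, and it is the unique such world exactly when every non-trash world satisfies $\tau(\chi)$, which by the induction hypothesis matches $\A\chi$ on $\MN$.

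The main obstacle I anticipate is verifying that nested occurrences of $\A$ inside $\Diamond$, and vice versa, interact correctly with the trash-world construction. The key observations that smooth this over are that $@_{\s \lor \neg \tau(\chi)}\top$ is a global statement whose truth is independent of the evaluation world, so it behaves in \MLi{} exactly as $\A \chi$ does in $\ML(\A)$, and that the reflexivity-plus-isolation of $w_\s$ prevents any $\Diamond$-subformula of $\tau(\chi)$ from leaking between the trash region and the rest of the model, keeping the inductive step for $\Diamond$ routine.
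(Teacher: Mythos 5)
Your proposal is correct and follows the same two-pronged strategy as the paper: the upper bound via the simulation of $@_{\varphi}\psi$ by $\exists_{=1}\varphi \land \exists_{=1}(\varphi\land\psi)$ together with \EXPT{}-completeness of \MLC{} under unary encoding, and the lower bound via a reduction from $\ML(\A)$ using a unique reflexive ``trash'' world enforced by $@_{\s}\top \land @_{\Diamond\s}\s$ and the translation $\tau(\A\psi)=@_{\s\lor\neg\tau(\psi)}\top$. The one genuine difference is in how the lower-bound induction is organised: the paper first converts the input to negation normal form and translates the dual operator separately, $\tau(\E\psi)=@_{p_\psi}(\tau(\psi)\land\neg\s)$ with a fresh witness variable $p_\psi$ per subformula, so that two one-directional inductions suffice; you instead let $\tau$ commute with negation and prove a biconditional invariant at all non-trash worlds. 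Your route works, and is arguably leaner (no NNF preprocessing, no witness variables), because $@_{\s\lor\neg\tau(\chi)}\top$ is a global statement whose truth is determined exactly by whether some non-trash world falsifies $\tau(\chi)$, which is what the biconditional induction hypothesis supplies. Two small caveats: first, $@_{\Diamond\s}\s$ forbids only edges \emph{into} $w_\s$ from other worlds (plus forcing the self-loop), not edges out of $w_\s$, so ``otherwise isolated'' overstates what the formula guarantees in the backward direction; this is harmless, since the induction only needs that non-trash worlds have no trash successors. Second, your claim that the \MLC{} translation is polynomial is not literally true, since $\varphi$ is duplicated in $\exists_{=1}\varphi\land\exists_{=1}(\varphi\land\psi)$ and nested descriptions then cause exponential blow-up; the paper glosses over the same point, and it is repairable, e.g.\ by replacing each description body $\chi$ with a fresh variable $q_\chi$ and adding the global definition $\exists_{\leq 0}\neg\bigl((q_\chi\land\chi')\lor(\neg q_\chi\land\neg\chi')\bigr)$, which keeps the reduction polynomial.
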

\begin{proof}
As we have observed, the upper bound is trivial, so we focus on showing \EXPT{}-hardness.
To this end, we reduce  $\ML(\A)$-satisfiability to \MLi{}-satisfiability.
First, given an $\ML(\A)$-formula, we transform it into a formula $\varphi$ in the negation normal form NNF, where negations occur only in front of propositional variables.
This can be done in logarithmic space, but requires using additional operators, namely $\land$, $\Box$, and $\E$.
In particular, $\E$ stands for `somewhere' and is dual to $\A$ similarly to $\Diamond$ being dual to $\Box$.
Then, we construct a translation of such formulas in NNF  to \MLi{}-formulas as follows:
\begin{align*}
\tau(p) & =  p , && & \tau(\Diamond\psi) & =  \Diamond\tau(\psi),
\\
\tau(\neg p) &  =  \neg p, && & \tau(\Box\psi) & =  \Box\tau(\psi),
\\
\tau(\psi \lor \chi) & = \tau(\psi) \lor \tau(\chi), && & \tau(\E \psi) & = @_{ p_\psi} ( \tau(\psi) \land \neg \s),
\\
\tau(\psi \land \chi) & = \tau(\psi) \land \tau(\chi), && & \tau(\A\psi) & =  @_{ (\s \lor \neg \tau(\psi)) }\top, 
\end{align*}
where $p \in \prop$, $\psi$ and $\chi$ are subformulas of $\varphi$, $\s$ is a fresh variable marking a `trash' world, and $p_\psi$ is a fresh variable for each $\psi$. 
Our finally constructed formula $\varphi'$ is defined as follows:
$$
\varphi ' = 
\tau(\varphi) \land \neg \s \land @_{ \s}\top \land @_{ \Diamond \s} \s.
$$ 
Since $\varphi'$ is constructed in logarithmic space from $\varphi$, it remains to show that $\varphi$ and $\varphi'$ are equisatisfiable.

If $\varphi$ is satisfiable, then $\M , w \models \varphi$, for some $\M= (W,R,V)$ and $w\in W$.
To show that $\varphi'$ is satisfiable, we construct, in two steps,
a model $\M' = (W',R',V')$ extending $\M$.
First, for each subformula $\psi$ of $\varphi$ which is satisfied in some world in $\M$ we choose an arbitrary
world $v \in W$ such that
$\M,v \models \psi $ and we let ${ V'(p_\psi) = \{ v \} }$.
Second, we add a single new world $w_\s$ to $W'$ as well as we set  ${V'(\s) = \{ w_\s \} }$ and $(w_\s,w_\s) \in R'$.
Then, we can show by induction on the structure of $\varphi$ that for all  $v\in W$,
if $\M, v \models \varphi$
then $\M', v \models \tau(\varphi)$.
This, in particular, implies that $\M',w \models \tau(\varphi)$. 
By the construction of $\M'$ we have also $\M',w \models \neg \s \land @_{\s}\top\land @_{ \Diamond \s} \s$, so we can conclude that $\M',w \models \varphi'$.

For the opposite direction we assume that $\varphi'$ is satisfiable, so  $\M',w \models \varphi'$ for some $\M'=(W',R',V')$ and $w \in W'$.
In particular
$\M',w \models \neg \s \land @_{ \s}\top \land @_{ \Diamond \s} \s$, so
there exists a unique world $w_\s \in W'$ such that $\M', w_\s \models \s$, and $\M',w \models \neg \s$ implies that  $w_s \neq w$.
Now, we construct $\M=(W,R,V)$ by deleting from $\M'$ the world $w_s$ and restricting the accessibility relation and the valuation to this smaller set of worlds.
Then, we can show by induction on the structure of $\varphi$ that for any  $v \in W$, if $\M ', v \models \tau(\varphi)$, then $\M, v \models \varphi$.
Since $\M',w \models \tau(\varphi)$ and $w \in W$, we obtain that $\M, w \models \varphi$.
\qed
\end{proof}

Note that the reduction in the proof above  provides us  with a satisfiability preserving translation between languages.
The existence of such a reduction does not mean, however, that
there exists a translation preserving
equivalence of formulas.
In the next section we will study the existence of the second type of translations to compare the expressiveness  of \MLi{} with that of $\HL(@)$ and~\MLC{}.

\section{Expressive Power}\label{expressiveness}

In the previous section we have established the computational complexity of reasoning in \MLi{}. 
Now, we will 
compare \MLi{} with  $\HL(@)$ and \MLC{} from the point of view of expressiveness.
We will study their relative expressive power  over the class of all frames, as well as  over linear frames $L$ (where the accessibility relation  is irreflexive, transitive, and trichotomous), and over the frames $\Z$ which are isomorphic to the standard (strict) order of integers.

To this end, for a class $\FR$ of frames below we define the \emph{greater-than expressiveness} relation $\preccurlyeq_\FR$ (we drop the index  $\FR$ in the case of all frames).
If logics $\Lan_1$ and $\Lan_2$ are  non-hybrid, then we let 
$\Lan_1 \preccurlyeq_\FR \Lan_2$, if,
for any 
$\Lan_1$-formula $\varphi$, there is an $\Lan_2$-formula $\varphi'$ 
such that
$\M,w \models \varphi$ if and only if
$\M,w \models \varphi'$, for any model $\M$ based on a frame from the class $\FR$ and any world $w$ in $\M$.
If $\Lan_1$ is hybrid but $\Lan_2$ is not,  
we treat nominals 
as fresh propositional variables in $\Lan_2$, so
we can still require that $\M,w \models \varphi$ implies 
$\M,w \models \varphi'$.
For the opposite direction we require that if $\M,w \models \varphi'$, for a non-hybrid model $\M=(W,R,V)$, then $V(i)$ is a singleton for each $i \in \nom(\varphi)$;
thus we can treat $\M$ as a hybrid model and require now that $\M,w \models \varphi$.
If $\Lan_1$ is non-hybrid but $\Lan_2$ is hybrid, we 
define 
$\Lan_1 \preccurlyeq \Lan_2$ analogously.
Then,  $\Lan_2$ has a \emph{strictly higher expressiveness} than $\Lan_1$, in symbols $\Lan_1 \prec_\FR \Lan_2$, if 
$\Lan_1 \preccurlyeq_\FR \Lan_2$, but $\Lan_2 \not\preccurlyeq_\FR \Lan_1$,
whereas   $\Lan_1$ have the \emph{same  expressiveness} as $\Lan_2$, in symbols $\Lan_1 \approx_\FR \Lan_2$, if both $\Lan_1 \preccurlyeq_\FR \Lan_2$ and $\Lan_2 \preccurlyeq_\FR \Lan_1$.


For $\Lan_1 \preccurlyeq_\FR  \Lan_2$ it  suffices to construct a translation, but
showing that ${\Lan_1 \not\preccurlyeq_\FR  \Lan_2}$ is usually more complicated. It can be obtained, for example, by using an adequate notion of bisimulation, which we present for \MLi{} below.

\begin{definition}\label{def::bisim}
A $\DD$-\emph{bisimulation} between  $\M=(W,R,V)$ and $\M'=(W',R',V')$ is any total (i.e., serial and surjective) relation ${Z \subseteq W \times W'}$ such that whenever $(w,w') \in Z$, the following conditions hold:
\begin{description}
\item[Atom:]
$w$ and $w'$ satisfy the same propositional variables, 

\item[Zig:]
if there is $v \in W$ such that $(w,v) \in R$, then there is $v' \in W'$ such $(v,v') \in Z$ and $(w',v') \in R'$,

\item[Zag:]
if there is $v' \in W'$ such that $(w',v') \in R'$, then there is $v \in W$ such $(v,v') \in Z$ and $(w,v) \in R$,

\item[Singular:]
$Z(w)= \{ w' \}$  if and only if  $Z^{-1}(w') = \{ w \}$\footnote{We use here the functional notation where $Z(w) = \{v \mid (w,v) \in Z \}$.}.
\end{description}
\end{definition}

Note that by relaxing the definition of $\DD$-bisimulation, namely 
not requiring the totality of $Z$ and removing  Condition (Singular), we obtain the standard notion of bisimulation, which is adequate for basic modal language~\cite{blackburn2002modal,blackburn2006handbook}.
Additional restrictions imposed on the bisimulation give rise to bisimulations adequate for $\MLC{}$ and $\HL(@)$.
In particular,  \MLC-bisimulation is defined by extending the standard bisimulation (for basic modal language) with the requirement that $Z$ contains a bijection between $W$ and $W'$ \cite{areces2010modal}.
In turn, 
an $\HL$-bisimulation introduces to the standard bisimualtion an additional condition (Nom): for each $i \in \nom$, if $V(i)= \{w \}$ and $V'(i)=\{ w' \}$, then $Z(w,w')$~\cite{areces200714}. 
We write $\M, w \leftrightarroweq_{\DD} \M', w'$
if there is a $\DD$-bisimulation $Z$ between $\M$  and $\M'$ such that $(w,w') \in Z$.
Similarly, in the cases of \MLC{} and $\HL(@)$ we write $\M, w \leftrightarroweq_{\MLC} \M', w'$ and $\M, w \leftrightarroweq_\HL \M', w'$, respectively.
These bisimulations satisfy invariance lemmas for the corresponding languages, namely
if $\M, w \leftrightarroweq_{\MLC} \M', w'$ (resp. $\M, w \leftrightarroweq_\HL \M', w'$), then, for any $\MLC$-formula (resp. $\HL(@)$-formula) $\varphi$, it holds that $\M, w \models \varphi$ if and only if $\M', w' \models \varphi$ \cite{areces2010modal,areces200714}. 
Next, we show an analogous result for $\DD$-bisimulation.

\begin{lemma}\label{bisim}
If $\M, w  \leftrightarroweq_\DD \M', w'$ then, for any $\MLi$-formula $\varphi$, it holds that $\M, w \models \varphi$ if and only if $\M', w' \models \varphi$.
\end{lemma}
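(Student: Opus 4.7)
The plan is a standard induction on the structure of $\varphi$, where all cases except the $@$ case are routine, and the interesting work is done by combining totality of $Z$ with the Singular condition.

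For the base case $\varphi = p$ we invoke Atom. The Boolean cases $\neg\psi$ and $\psi_1 \lor \psi_2$ are immediate from the inductive hypothesis. For $\varphi = \Diamond \psi$, the proof is exactly as in the classical modal invariance lemma: seriality of $Z$ plus Zig handles the left-to-right direction, surjectivity plus Zag handles the right-to-left direction, and the inductive hypothesis transfers satisfaction of $\psi$ along the chosen $Z$-pair.

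The real work is the case $\varphi = @_{\varphi_1} \varphi_2$. I will prove only the direction $\M,w\models @_{\varphi_1}\varphi_2 \Rightarrow \M',w'\models @_{\varphi_1}\varphi_2$; the converse is symmetric using surjectivity in place of seriality. Assume there is a unique $v\in W$ with $\M,v\models\varphi_1$, and that additionally $\M,v\models\varphi_2$. By seriality of $Z$ pick some $v'\in W'$ with $(v,v')\in Z$; the inductive hypothesis applied to $\varphi_1$ and $\varphi_2$ yields $\M',v'\models\varphi_1$ and $\M',v'\models\varphi_2$, giving the witness. For uniqueness, suppose $u'\in W'$ satisfies $\M',u'\models\varphi_1$. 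By surjectivity of $Z$ pick $u\in W$ with $(u,u')\in Z$; the inductive hypothesis gives $\M,u\models\varphi_1$, so $u=v$ by uniqueness in $\M$. Thus $(v,u'),(v,v')\in Z$. If $u'\neq v'$, then $Z(v)\neq\{v'\}$, so by Singular we have $Z^{-1}(v')\neq\{v\}$, hence there is $v_2\neq v$ with $(v_2,v')\in Z$; the inductive hypothesis applied to this $Z$-pair and $\varphi_1$ gives $\M,v_2\models\varphi_1$, contradicting the uniqueness of $v$ in $\M$. So $u'=v'$, which yields $\M',w'\models @_{\varphi_1}\varphi_2$.

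The main obstacle is exactly this $@$-case: one needs \emph{both} inclusions in Singular to rule out two separate failure modes of uniqueness in $\M'$, namely (i) a single $\M$-world being related by $Z$ to two distinct $\M'$-worlds that both satisfy $\varphi_1$, and (ii) a single $\M'$-world being related from two distinct $\M$-preimages. The biconditional formulation of Singular is what makes these two failure modes collapse into one, and it is precisely this that makes $\DD$-bisimulation stronger than ordinary bisimulation but weaker than the bijection condition used for \MLC{}-bisimulation.
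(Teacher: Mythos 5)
Your proof is correct and follows essentially the same route as the paper's: induction on formula structure with the $@$-case handled by seriality, surjectivity, the inductive hypothesis, and the Singular condition, your only difference being that you apply Singular in contrapositive form (from $Z(v)\neq\{v'\}$ to a second $Z$-preimage of $v'$) where the paper first establishes $Z^{-1}(v')=\{v\}$ and then derives $Z(v)=\{v'\}$, which is the same argument run in mirror order. Minor quibble: in the $\Diamond$-case Zig and Zag alone suffice, so invoking seriality and surjectivity there is superfluous, though harmless.
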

\begin{proof}
Assume that  $Z$ is a $\DD$-bisimulation between models $\M=(W,R,V)$ and ${\M'=(W',R',V')}$ satisfying $\M, w  \leftrightarroweq_\DD \M', w'$.
The proof is by induction on the structure of $\varphi$, where the
non-standard part is for the inductive step for \DD{}s, where $\varphi$ is of the form $@_{ \psi_1} \psi_2$.
If
$\M ,w \models @_{ \psi_1} \psi_2$, there is  a unique world $v \in W$ such that $\M, v \models \psi_1$, and moreover $\M, v \models \psi_2$.
As $Z$ is serial,  there is  $v' \in Z(v)$, and so, 
by the inductive assumption, $\M' , v' \models \psi_1 \land \psi_2$.
Suppose towards a contradiction that
$\M' ,w' \not\models @_{ \psi_1} \psi_2$, so
there is  $u' \neq v'$ such that $\M' , u' \models \psi_1$.
Since $Z$ is surjective,  there is $u \in W$ such that $u' \in Z(u)$.
Moreover, by the inductive assumption we obtain that $\M,u \models \psi_1$.
However, $v$ is the only world in $W$ which satisfies $\psi_1$, so $u=v$ and consequently $u' \in Z(v)$.
For the same reason there cannot be in $W$ any  world different than $v$ which is mapped by $Z$ to $v'$.
Hence, $Z^{-1}(v') = \{v \}$ and thus $Z(v) = \{ v' \}$.
This, however, contradicts the fact that $u' \in Z(v)$ and $u' \neq v'$.
The opposite implication is shown analogously. 
\qed
\end{proof}

We will exploit bisimulations in our analysis.
We start by considering arbitrary frames and we show that
$
\HL(@) \prec \MLi 
$
and $\MLi \prec \MLC$.

\begin{theorem}\label{thm:expres_@_iota}
It holds that $\HL(@) \prec \MLi$;
the result holds already over the class of finite frames.
\end{theorem}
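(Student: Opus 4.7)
The plan is to establish the theorem in two parts: first prove $\HL(@) \preccurlyeq \MLi$ by an equivalence-preserving translation, and then prove $\MLi \not\preccurlyeq \HL(@)$ by producing two finite $\HL$-bisimilar pointed models on which a single $\MLi$-formula takes different truth values.

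For the inclusion $\HL(@) \preccurlyeq \MLi$, the plan is to associate with every nominal $i$ a fresh propositional variable $p_i$ and define a recursive translation $\tau$ by $\tau(p) = p$, $\tau(i) = p_i$, $\tau(\neg\psi) = \neg\tau(\psi)$, $\tau(\psi_1 \lor \psi_2) = \tau(\psi_1) \lor \tau(\psi_2)$, $\tau(\Diamond\psi) = \Diamond\tau(\psi)$, and $\tau(@_i\psi) = @_{p_i}\tau(\psi)$. The translation of a whole $\HL(@)$-formula $\varphi$ is then $\tau(\varphi) \land \bigwedge_{i \in \nom(\varphi)} @_{p_i}\top$. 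The role of the additional conjuncts is to force $V(p_i)$ to be a singleton in any non-hybrid model satisfying the formula, after which $p_i$ behaves exactly like the nominal $i$. A routine induction on the structure of $\varphi$ then verifies both sides of the $\preccurlyeq$ condition.

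For the strict part $\MLi \not\preccurlyeq \HL(@)$, the plan is to single out the $\MLi$-formula $\varphi_0 = @_\top\top$, which holds precisely when the underlying domain is a singleton. I would then produce two finite hybrid models: $\M_1 = (\{w\}, \{(w,w)\}, V_1)$, with $V_1(p) = \emptyset$ for every $p \in \prop$ and $V_1(i) = \{w\}$ for every $i \in \nom$; and $\M_2 = (\{u_1, u_2\}, R_2, V_2)$, where $R_2 = \{u_1, u_2\} \times \{u_1, u_2\}$, $V_2(p) = \emptyset$ for every $p$, and $V_2(i) = \{u_1\}$ for every $i$. Clearly $\M_1, w \models \varphi_0$ while $\M_2, u_1 \not\models \varphi_0$. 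The key step is to check that $Z = \{(w,u_1),(w,u_2)\}$ is an $\HL$-bisimulation: Atom holds vacuously as no propositional variable is ever true, Zig and Zag follow from the reflexivity of $w$ and the totality of $R_2$, and Nom is satisfied since $V_1(i) = \{w\}$, $V_2(i) = \{u_1\}$, and $(w,u_1) \in Z$ for every $i$. Invariance of $\HL(@)$ under $\HL$-bisimulation then forces every $\HL(@)$-formula to agree on $(\M_1, w)$ and $(\M_2, u_1)$, so none of them can be equivalent to $\varphi_0$. Since both models are finite, the separation already holds over the class of finite frames.

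The step I expect to require most care is the construction in the second part: the bisimulation $Z$ must respect Nom not only for the nominals occurring in a putative $\HL(@)$-equivalent, but for every $i \in \nom$, so the valuations $V_1$ and $V_2$ must be chosen uniformly enough that no nominal can distinguish $u_1$ from $u_2$. Collapsing every nominal in $\M_2$ onto the single world $u_1$ is the design choice that removes this asymmetry and lets the bisimulation go through.
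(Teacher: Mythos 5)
Your first part (the translation $\tau$ with fresh variables $p_i$ and the guard $\bigwedge_{i\in\nom(\varphi)}@_{p_i}\top$) is correct and is essentially the paper's own argument, and your choice of separating formula $@_\top\top$ also matches the paper. The second part, however, has a genuine flaw: the relation $Z=\{(w,u_1),(w,u_2)\}$ is not an $\HL$-bisimulation in the sense needed for the invariance lemma, because the pair $(w,u_2)$ relates worlds that disagree on every nominal ($w$ satisfies each $i$, while $u_2$ satisfies none); atom harmony for $\HL(@)$-bisimulations must cover nominals as well as propositional variables, since nominals are themselves formulas of $\HL(@)$. This is not a pedantic point about the definition: your two pointed models are in fact distinguishable in $\HL(@)$. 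Since $R_2$ is total, $u_2$ is a successor of $u_1$ at which $i$ fails, so $\M_2,u_1\models\Diamond\neg i$, whereas in $\M_1$ the only successor of $w$ is $w$ itself and $w\models i$, so $\M_1,w\not\models\Diamond\neg i$. Hence the conclusion you need --- that every $\HL(@)$-formula agrees on $(\M_1,w)$ and $(\M_2,u_1)$ --- is simply false for these models, and they cannot witness $\MLi\not\preccurlyeq\HL(@)$.

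The repair is to exploit the fact that $\HL$-bisimulations, unlike $\DD$-bisimulations, need not be total: the extra world of the larger model should be left unnamed, unreachable, and outside the range of $Z$. This is exactly what the paper does: it takes a one-world model and a two-world model with (essentially) empty accessibility, puts all nominals on $w$ and on $w'$ respectively, and sets $Z=\{(w,w')\}$; then Atom (including nominals), Zig, Zag and (Nom) all hold, invariance applies, and $@_\top\top$ separates the two pointed models. Your own models can be fixed the same way, e.g.\ by dropping all edges into and out of $u_2$ (keeping only the loops on $w$ and $u_1$) and taking $Z=\{(w,u_1)\}$; the design choice that matters is not collapsing all nominals onto $u_1$, but making $u_2$ invisible to both the modal operators and the nominals.
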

\begin{proof}
Given an $\HL(@)$-formula $\varphi$ we construct an $\MLi$-formula $\varphi'$ by setting
$
{\varphi' = \varphi  \land \bigwedge_{i \in \nom(\varphi)} @_{ i} \top }.
$
The conjunction $\bigwedge_{i \in \nom(\varphi)} @_{i} \top$ guarantees that each $i \in \nom(\varphi)$ holds in exactly one world,
so 
$\HL(@) \preccurlyeq \MLi$.

To prove that $\MLi \not\preccurlyeq  \HL(@)$, 
we show that the $\MLi$-formula
$@_{ \top} \top$,
defining the class of frames with exactly one world, cannot be expressed in $\HL(@)$.
For this, we
construct models $\M$ and $\M'$ and an $\HL$-bisimulation $Z$ between them:

\begin{center}
\begin{tikzpicture}

\tikzset{>=latex}

\node[draw,circle,fill,minimum size=0.1pt,scale=0.4] (w1) at (0,0) {};
\node[draw=none,right=0.1 of w1] (t1) {$w$};
\node[draw=none,below=0.1 of w1] (t1') {$i,j,k, \dots$};
\node[draw=none,scale=0.4] (blank) at (0.5,0) {};

\node[draw,circle,fill,minimum size=0.1pt,scale=0.4] (w2) at (4,0) {};
\node[draw=none,right=0.1 of w2] (t2) {$w'$};
\node[draw=none,below=0.1 of w2] (t2') {$i,j,k, \dots$};
\node[draw,circle,fill,minimum size=0.1pt,scale=0.4] (w3) at (6,0) {};
\node[draw=none,scale=0.4] (blank') at (6.5,0) {};

\node[fit=(t1)(t1')(blank), draw, inner sep=1pt] (M) {};
\node[draw=none,below=0.1 of M] {$\M$};

\node[fit=(t2)(t2')(w3)(blank'), draw, inner sep=1pt] (M') {};
\node[draw=none,below=0.1 of M'] {$\M'$};

\draw [-, dashed, blue] (w1) --  (0,0.5) -- node[below] {$Z$}  (4,0.5) -- (w2);

\end{tikzpicture}
\end{center}

\noindent 
Clearly 
$\M,w \models @_{ \top} \top$, but $\M',w' \not\models @_{ \top} \top$.
However, since  $Z$  is an $\HL$-bisimulation,  
there exists no   $\HL(@)$-formula which holds in $w$, but not in $w'$.
\qed
\end{proof}



Next, we  use $\DD$-bisimulation to show that $\MLi \prec \MLC$.

\begin{theorem}\label{thm:expres_count_iota}
It holds that $\MLi \prec \MLC$;
the result holds already over the class of finite frames.
\end{theorem}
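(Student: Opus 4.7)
The plan is to establish the two inequalities separately. For $\MLi \preccurlyeq \MLC$, I would promote the observation from \Cref{logic} into a compositional translation $\tau$ that is the identity on propositional variables, commutes with Boolean connectives and with $\Diamond$, and sends $@_\varphi \psi$ to $\exists_{=1}\tau(\varphi) \land \exists_{=1}\bigl(\tau(\varphi) \land \tau(\psi)\bigr)$. A routine induction on formula structure would then yield $\M, w \models \varphi$ iff $\M, w \models \tau(\varphi)$ for every model $\M$ and every world $w$; the $@$-clause uses exactly that the $\MLi$-semantics of $@_\varphi\psi$ asserts both that $\varphi$ holds in a unique world and that this same world also satisfies $\psi$.

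For the strict separation $\MLC \not\preccurlyeq \MLi$ I would exhibit a cardinality-counting $\MLC$-formula with no $\MLi$-equivalent, the natural candidate being $\exists_{=2}\top$, reflecting the fact that the only finite cardinality $\MLi$ can define is $1$ (via $@_\top\top$). The strategy is to construct two finite models together with a $\DD$-bisimulation between them on which $\exists_{=2}\top$ disagrees; invariance under $\DD$-bisimulation (\Cref{bisim}) then precludes any separating $\MLi$-formula. Concretely, I would take $\M = (W, \emptyset, V)$ with $W = \{w_1, w_2\}$ and $\M' = (W', \emptyset, V')$ with $W' = \{w_1', w_2', w_3'\}$, where both valuations are empty, and set $Z = W \times W'$. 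Seriality, surjectivity, and the Atom clause are immediate; the Zig and Zag clauses hold vacuously because no world has any successor; and no $Z(w_i)$ or $Z^{-1}(w_j')$ is a singleton, so the biconditional in the Singular clause is trivially satisfied with both sides false. Yet $\M, w_1 \models \exists_{=2}\top$ while $\M', w_1' \not\models \exists_{=2}\top$, which completes the separation; both frames being finite, the conclusion transfers to finite frames as well.

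The only delicate point is ensuring that the Singular clause is met by the full-product relation. Choosing cardinalities $2$ and $3$ (rather than $1$ versus anything bigger) is essential: it guarantees that no image $Z(w)$ or preimage $Z^{-1}(w')$ is ever a singleton, so Singular holds vacuously on both sides. Had one of the two models contained a single world, Singular would force $Z$ to be a bijection between equinumerous domains and the construction would collapse, so the choice of two genuinely non-unit cardinalities is what makes the argument go through.
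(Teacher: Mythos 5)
Your proposal is correct and follows essentially the same route as the paper: the paper also obtains $\MLi \preccurlyeq \MLC$ by translating $@_\varphi\psi$ into \MLC{} (via $\E(\varphi\land\psi\land\neg\D\varphi)$, interchangeable with your $\exists_{=1}$-based clause already noted in \Cref{logic}), and refutes $\MLC \preccurlyeq \MLi$ with exactly your counterexample, namely $\exists_{=2}\top$ together with a two-world and a three-world model related by the full-product $\DD$-bisimulation and \Cref{bisim}. Your closing remark about why the Singular condition needs both domains to be non-singletons is a correct and welcome elaboration of a point the paper leaves implicit.
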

\begin{proof}
To show that $\MLi \preccurlyeq \MLC$,
we observe that $@_{ \varphi} \psi$ can be expressed as 
$\E (\varphi \land \psi \land \neg \D \varphi)$, where $\E$ and $\D$ are the `somewhere' and `difference' operators.
Both $\E$ and $\D$ can be expressed in \MLC{}, for example, 
$\E \varphi$ can be expressed as $\exists_{\geq 1} \varphi$ and $\D \varphi$ as $(\varphi \to \exists_{\geq 2} \varphi) \land (\neg \varphi \to \exists_{\geq 1} \varphi)$ \cite{areces2010modal}.
Thus $\MLi \preccurlyeq \MLC$.

To prove that $\MLC \not\preccurlyeq \MLi$, we 
show that 
$\MLi$ cannot express the $\MLC$-formula $\exists_{=2} \top$ defining frames with exactly two worlds in the domain.
Indeed, consider models $\M$ and $\M'$  and  a $\DD$-bisimulation between them as below:

\begin{center}
\begin{tikzpicture}

\tikzset{>=latex}

\node[draw,circle,fill,minimum size=0.1pt,scale=0.4] (w1) at (0,0) {};
\node[draw=none,right=0.1 of w1] (t1) {$w_1$};
\node[draw,circle,fill,minimum size=0.1pt,scale=0.4] (w2) at (2,0) {};
\node[draw=none,right=0.1 of w2] (t2) {$w_2$};

\node[draw,circle,fill,minimum size=0.1pt,scale=0.4] (w3) at (6,0) {};
\node[draw=none,right=0.1 of w3] (t3) {$w_1'$};
\node[draw,circle,fill,minimum size=0.1pt,scale=0.4] (w4) at (8,0) {};
\node[draw=none,right=0.1 of w4] (t4) {$w_2'$};
\node[draw,circle,fill,minimum size=0.1pt,scale=0.4] (w5) at (10,0) {};
\node[draw=none,right=0.1 of w5] (t5) {$w_3'$};


\node[fit=(w1)(t2), draw, inner sep=1pt] (M) {};
\node[draw=none,below=0.1 of M] {$\M$};

\node[fit=(w3)(t5), draw, inner sep=1pt] (M') {};
\node[draw=none,below=0.1 of M'] {$\M'$};

\draw [-, dashed, blue] (w1) --  (0,0.4) -- (6,0.4) -- (w3);
\draw [-, dashed, blue] (w1) --  (0,0.6) -- (8,0.6) -- (w4);
\draw [-, dashed, blue] (w1) --  (0,0.8) -- (10,0.8) -- (w5);

\draw [-, dashed, blue] (w2) --  (2,-0.1) --  node[above] {$Z$} (6,-0.1) -- (w3);
\draw [-, dashed, blue] (w2) --  (2,-0.3) -- (8,-0.3) -- (w4);
\draw [-, dashed, blue] (w2) --  (2,-0.5) -- (10,-0.5) -- (w5);

\end{tikzpicture}
\end{center}

\noindent
Clearly $\M ,w_1 \models \exists_{=2} \top$, but $\M' ,w_1' \not\models \exists_{=2} \top$.
Since $Z$  is a $\DD$-bisimulation  mapping $w_1$ to $w_1'$,  these words satisfy the same $\MLi$-formulas.
\qed
\end{proof}

We note that the argument from the proof above, showing that there is no $\MLi$ formula which defines the class of frames with domains of cardinality $2$,
can be easily generalised to any cardinality larger than 1.
In contrast, as we showed in the proof of \Cref{thm:expres_@_iota}, the frame property of having the domain of cardinality 1 can 
be captured by the $\MLi$-formula  $@_{ \top} \top$.
In other words, $\MLi$ cannot define frames bigger than singletons.

Next, we focus on linear frames where the following result holds




\begin{theorem}\label{thm::overL}
The following relations hold: $\HL(@) \prec_L \MLi  \prec_L \MLC$.
\end{theorem}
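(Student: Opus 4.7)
My plan is to verify the four conditions required for the strict chain $\HL(@) \prec_L \MLi \prec_L \MLC$: the positive inclusions $\HL(@) \preccurlyeq_L \MLi$ and $\MLi \preccurlyeq_L \MLC$, and the strict inequalities $\MLi \not\preccurlyeq_L \HL(@)$ and $\MLC \not\preccurlyeq_L \MLi$. The two positive inclusions are immediate: the translations constructed in \Cref{thm:expres_@_iota} and \Cref{thm:expres_count_iota} are defined uniformly on formulas and use no special property of the underlying frame class, so they transfer verbatim to the class $L$ of linear frames.

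For $\MLi \not\preccurlyeq_L \HL(@)$, I plan to reuse the witness $@_\top \top$ from \Cref{thm:expres_@_iota} and adapt the bisimulation argument to the linear setting. Given any candidate $\HL(@)$-formula $\varphi'$, I will take the singleton linear frame $\M = (\{w\}, \emptyset)$ and the two-element chain $\M' = (\{u, v\}, \{(u, v)\})$; assign every nominal in $\nom(\varphi')$ to $w$ in $\M$ and to the maximum $v$ in $\M'$, and align the propositional atoms of $\varphi'$ so that $w$ and $v$ agree. The relation $Z = \{(w, v)\}$ will be an $\HL$-bisimulation: \textbf{Atom} and \textbf{Nom} hold by construction, while \textbf{Zig} and \textbf{Zag} are vacuous since neither $w$ nor $v$ has a successor. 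The invariance of $\HL(@)$ under $\HL$-bisimulation then gives $\M, w \models \varphi'$ iff $\M', v \models \varphi'$, contradicting $\M, w \models @_\top \top$ and $\M', v \not\models @_\top \top$. The key point is that placing all nominals on the \emph{maximum} of the two-element chain preserves the bisimulation even though linear frames are irreflexive.

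For $\MLC \not\preccurlyeq_L \MLi$, the witness $\exists_{=2} \top$ used in \Cref{thm:expres_count_iota} no longer works: over $L$ the maximum and the non-maximum of a two-element chain are uniquely identified by $\Box\bot$ and $\Diamond\top$, so $\exists_{=2} \top$ becomes $\MLi$-equivalent to $@_{\Box\bot} \top \land @_{\Diamond\top} \top$. I instead plan to take as witness a counting formula such as $\exists_{\geq 2} p$ for a propositional variable $p$, and to construct a $\DD$-bisimulation between two linear models that disagree on it. The hard part will be simultaneously arranging totality together with the \textbf{Atom} condition---which forces every $p$-world of one model to be paired with a $p$-world of the other---and the \textbf{Singular} condition, which is the precise clause tracking the uniqueness semantics of the $@$-operator. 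I expect to use a dense linear frame such as $(\Q, <)$, where the absence of past and universal modalities in $\MLi$ prevents it from globally witnessing a second $p$-satisfier, while $\MLC$ expresses this directly through $\exists_{\geq 2}$. Once the $\DD$-bisimulation is in place, \Cref{bisim} guarantees that the two models satisfy the same $\MLi$-formulas, yielding the required separation.
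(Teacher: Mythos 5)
Your positive inclusions and the separation $\MLi \not\preccurlyeq_L \HL(@)$ are fine. The singleton-versus-two-chain construction with all nominals placed on the maximum is a correct $\HL$-bisimulation (Zig/Zag are vacuous, (Nom) holds), and it is in fact simpler than the paper's argument, which instead uses two $\Z$-models and the witness $@_{p}\top$ precisely so that the same bisimulation can be reused verbatim for \Cref{thm::expsZ} over $\Z$. Your observation that $\exists_{=2}\top$ ceases to be a good witness over $L$ (being equivalent there to $@_{\Box\bot}\top \land @_{\Diamond\top}\top$) is also correct.

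The gap is in the second separation, $\MLC \not\preccurlyeq_L \MLi$: the plan to exhibit a \DD{}-bisimulation between two linear models that disagree on $\exists_{\geq 2} p$ cannot be carried out on any frames, dense or otherwise. Seriality plus (Atom) force every $p$-world of $\M$ to be $Z$-related only to $p$-worlds of $\M'$, and surjectivity gives the converse; so if $\M'$ has no $p$-world, neither has $\M$, and if $\M'$ has a unique $p$-world $u'$, then every $p$-world $u$ of $\M$ satisfies $Z(u)=\{u'\}$, whence (Singular) yields $Z^{-1}(u')=\{u\}$ and $\M$ also has exactly one $p$-world. Thus whether a model contains $0$, exactly $1$, or at least $2$ worlds satisfying $p$ is invariant under \DD{}-bisimulation, so $\exists_{\geq 2}p$ (like $\exists_{\geq 1}p$) can never distinguish \DD{}-bisimilar pointed models; \Cref{bisim} gives only a necessary condition for $\MLi$-expressibility, and choosing $(\Q,<)$ does not circumvent this counting obstruction. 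This is exactly why the paper avoids a cross-model \DD{}-bisimulation here: it takes two models over $\Z+\Z$, with $p$ true throughout the first copy of one model and nowhere in the other, relates the distinguished worlds of the second copies by a plain, non-total bisimulation (handling the $\Diamond$-fragment), and then shows separately---via a \DD{}-bisimulation of each model with itself identifying all worlds within a copy---that no formula of the form $@_{\varphi}\psi$ is true anywhere in either model, since any formula satisfied somewhere is satisfied infinitely often; an induction then gives agreement on all $\MLi$-formulas while the points disagree on $\exists_{\geq 1}p$. To repair your proof you must replace the direct bisimulation step by an argument of this two-stage kind (your witness $\exists_{\geq 2}p$ may be kept; the same $\Z+\Z$ models work for it).
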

\begin{proof}
Clearly, $\HL(@) \preccurlyeq_L \MLi$
and
$\MLi  \preccurlyeq_L \MLC$
follow from  \Cref{thm:expres_@_iota,thm:expres_count_iota}, so it remains to show that
$\MLi \not\preccurlyeq_L \HL(@)$
and 
$\MLC \not\preccurlyeq_L \MLi$.

To show that $\MLi \not\preccurlyeq_L \HL(@)$
we construct  models $\M$ and $\M'$ over $\Z$  with an $\HL$-bisimulation $Z$, as depicted below (note that the accessibility relation in the models is the transitive closure of the relation depicted by arrows):

\begin{center}
\begin{tikzpicture}
[scale=0.65]

\tikzset{>=latex}

\node (d1) at (-3.5,0)  {$\cdots$};

\node[draw,circle,fill,minimum size=0.1pt,scale=0.4] (w-3) at (-3,0) {};
\node[draw,circle,fill,minimum size=0.1pt,scale=0.4] (w-2) at (-2,0) {};
\node[draw,circle,fill,minimum size=0.1pt,scale=0.4] (w-1) at (-1,0) {};
\node[draw=none,below=0.1 of w-1] (t-1) {$p$};

\node[draw,circle,fill,minimum size=0.1pt,scale=0.4] (w0) at (0,0) {};
\node[draw=none,below=0.7 of w0] (t0) {$i,j,k, \dots$};
\draw[-, dashed] (t0) -- (w0);
\node[draw=none,above left=0.1 and -0.2 of w0] (w) {$w$\color{white}$'$};

\node[draw,circle,fill,minimum size=0.1pt,scale=0.4] (w1) at (1,0) {};
\node[draw,circle,fill,minimum size=0.1pt,scale=0.4] (w2) at (2,0) {};
\node[draw,circle,fill,minimum size=0.1pt,scale=0.4] (w3) at (3,0) {};

\node (d2) at (3.7,0)  {$\cdots$};

\draw[->] (w-3) -- (w-2);
\draw[->] (w-2) -- (w-1);
\draw[->] (w-1) -- (w0);
\draw[->] (w0) -- (w1);
\draw[->] (w1) -- (w2);
\draw[->] (w2) -- (w3);


\node (d1') at (5.5,0) {$\cdots$};

\node[draw,circle,fill,minimum size=0.1pt,scale=0.4] (w-3') at (6,0) {};
\node[draw,circle,fill,minimum size=0.1pt,scale=0.4] (w-2') at (7,0) {};
\node[draw=none,below=0.1 of w-2'] (t-2') {$p$};
\node[draw,circle,fill,minimum size=0.1pt,scale=0.4] (w-1') at (8,0) {};
\node[draw=none,below=0.1 of w-1'] (t-1') {$p$};

\node[draw,circle,fill,minimum size=0.1pt,scale=0.4] (w0') at (9,0) {};
\node[draw=none,below=0.7 of w0'] (t0') {$i,j,k, \dots$};
\draw[-, dashed] (t0') -- (w0');
\node[draw=none,above left=0.1 and -0.15 of w0'] (w') {$w'$};

\node[draw,circle,fill,minimum size=0.1pt,scale=0.4] (w1') at (10,0) {};
\node[draw,circle,fill,minimum size=0.1pt,scale=0.4] (w2') at (11,0) {};
\node[draw,circle,fill,minimum size=0.1pt,scale=0.4] (w3') at (12,0) {};

\node (d2') at (12.7,0) {$\cdots$};

\draw[->] (w-3') -- (w-2');
\draw[->] (w-2') -- (w-1');
\draw[->] (w-1') -- (w0');
\draw[->] (w0') -- (w1');
\draw[->] (w1') -- (w2');
\draw[->] (w2') -- (w3');

\node[fit=(d1)(d2)(t0)(w), draw, inner sep=1pt] (M) {};
\node[draw=none,below=0.1 of M] {$\M$};

\node[fit=(d1')(d2')(t0')(w'), draw, inner sep=1pt] (M') {};
\node[draw=none,below=0.1 of M'] {$\M$'};

\draw [-, dashed, blue] (w0) --  (0,0.9) -- (9,0.9) -- (w0');
\draw [-, dashed, blue] (w1) --  (1,1.3) -- (10,1.3) -- (w1');
\draw [-, dashed, blue] (w2) --  (2,1.7) -- (11,1.7) -- (w2');
\draw [-, dashed, blue] (w3) --  (3,2.1) -- (12,2.1) -- (w3');
\node[blue] at (1.3,1.9) {$Z$};

\end{tikzpicture}
\end{center}

\noindent
Clearly $\M,w \models @_{ p} \top$, but 
$\M',w' \not\models @_{ p} \top$.
However, since $Z$ is an $\HL$-bisimulation mapping $w$ to $w'$, these worlds need to satisfy the same $\HL(@)$-formulas.

To show that  $\MLC \not\preccurlyeq_L \MLi$
we construct models $\MN$ and $\MN'$, each of them over a frame $\Z + \Z$ consisting of two copies of $\Z$, as depicted below:

\begin{center}
\begin{tikzpicture}
[scale=0.65]

\tikzset{>=latex}

\node (d1) at (-0.5,0)  {$\cdots$};

\node[draw,circle,fill,minimum size=0.1pt,scale=0.4] (w0) at (0,0) {};
\node[draw=none,above=0.1 of w0] {$w_{-1}$};
\node[draw=none,below=0.1 of w0] (t0) {$p$};

\node[draw,circle,fill,minimum size=0.1pt,scale=0.4] (w1) at (1,0) {};
\node[draw=none,above=0.1 of w1]  {$w_{0}$};
\node[draw=none,below=0.1 of w1]  {$p$};

\node[draw,circle,fill,minimum size=0.1pt,scale=0.4] (w2) at (2,0) {};
\node[draw=none,above=0.1 of w2] {$w_{1}$};
\node[draw=none,below=0.1 of w2]  {$p$};

\node (d2) at (2.8,0)  {$\cdots$};

\node[draw,circle,fill,minimum size=0.1pt,scale=0.4] (v0) at (3.5,0) {};
\node[draw=none,above=0.1 of v0] {$v_{-1}$};

\node[draw,circle,fill,minimum size=0.1pt,scale=0.4] (v1) at (4.5,0) {};
\node[draw=none,above=0.1 of v1] {$v_{0}$};

\node[draw,circle,fill,minimum size=0.1pt,scale=0.4] (v2) at (5.5,0) {};
\node[draw=none,above=0.1 of v2]  {$v_{1}$};

\node (d3) at (6.2,0) {$\cdots$};

\draw[->] (w0) -- (w1);
\draw[->] (w1) -- (w2);
\draw[->] (v0) -- (v1);
\draw[->] (v1) -- (v2);

\node (d1') at (8.5,0)  {$\cdots$};

\node[draw,circle,fill,minimum size=0.1pt,scale=0.4] (w0') at (9,0) {};
\node[draw=none,above=0.1 of w0']  {$w'_{-1}$};
\node[draw=none,below=0.1 of w0'] (t0') {\color{white}$p$};

\node[draw,circle,fill,minimum size=0.1pt,scale=0.4] (w1') at (10,0) {};
\node[draw=none,above=0.1 of w1'] {$w'_{0}$};

\node[draw,circle,fill,minimum size=0.1pt,scale=0.4] (w2') at (11,0) {};
\node[draw=none,above=0.1 of w2']  {$w'_{1}$};

\node (d2') at (11.8,0)  {$\cdots$};

\node[draw,circle,fill,minimum size=0.1pt,scale=0.4] (v0') at (12.5,0) {};
\node[draw=none,above=0.1 of v0'] {$v'_{-1}$};

\node[draw,circle,fill,minimum size=0.1pt,scale=0.4] (v1') at (13.5,0) {};
\node[draw=none,above=0.1 of v1'] {$v'_{0}$};

\node[draw,circle,fill,minimum size=0.1pt,scale=0.4] (v2') at (14.5,0) {};
\node[draw=none,above=0.1 of v2'] {$v'_{1}$};

\node (d3') at (15.2,0) {$\cdots$};

\draw[->] (w0') -- (w1');
\draw[->] (w1') -- (w2');
\draw[->] (v0') -- (v1');
\draw[->] (v1') -- (v2');

\node[fit=(d1)(d3)(t0)(w), draw, inner sep=2pt] (M) {};
\node[draw=none,below=0.1 of M] {$\MN$};

\node[fit=(d1')(d3')(t0')(w'), draw, inner sep=2pt] (M') {};
\node[draw=none,below=0.1 of M'] {$\MN'$};


\draw [-, dashed, blue] (v0) --  (3.5,-0.4) -- (12.5,-0.4) -- (v0');
\draw [-, dashed, blue] (v1) --  (4.5,-0.8) -- (13.5,-0.8) -- (v1');
\draw [-, dashed, blue] (v2) --  (5.5,-1.2) -- (14.5,-1.2) -- (v2');

\node[blue] at (7.3,0) {$Z$};

\end{tikzpicture}
\end{center}

\noindent 
It holds that $\MN,v_0 \models \exists_{\geq 1} p$, but 
$\MN',v_0' \not\models \exists_{\geq 1} p$.
However, we can show that $v_0$ and $v_0'$ satisfy the same \MLi{}-formulas.
To this end, we observe that $Z$ is a (standard) bisimulation, so $v_0$ and $v_0'$ satisfy the same formulas from the basic modal language.
The language of $\MLi$ contains also formulas of the form $@_{  \varphi} \psi$, but  none of them is satisfied in any world of $\MN$ or $\MN'$.
Indeed, in the case of $\MN$ we can construct a $\DD$-bisimulation $Z_\MN$ between $\MN$ and itself which consists of  pairs $(w_n,w_m)$ and $(v_n,v_m)$ for all $n,m \in Z$.
Hence, all worlds of the form $w_n$ satisfy the same $\MLi$-formulas, and the same holds for all worlds $v_n$.
Thus, no formula of the form $@_{  \varphi} \psi$ can be satisfied in $\MN$, as there are either no worlds satisfying $\varphi$ or there are infinitely many of them.
An analogous argument shows that  no formula of the form $@_{  \varphi} \psi$ can be satisfied in $\MN'$.
\qed
\end{proof}

Next we show that expressiveness results change when we consider frames  $\Z$.

\begin{theorem}\label{thm::expsZ}
The following relations hold: $\HL(@) \prec_\Z \MLi \approx_\Z \MLC$.
\end{theorem}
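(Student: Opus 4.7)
The theorem combines two statements, $\HL(@) \prec_\Z \MLi$ and $\MLi \approx_\Z \MLC$. Most of this is a restriction of what we have already established: $\HL(@) \preccurlyeq_\Z \MLi$ follows from Theorem~\ref{thm:expres_@_iota}, whose translation is uniform across frame classes; $\MLi \not\preccurlyeq_\Z \HL(@)$ is witnessed by the first pair of models in the proof of Theorem~\ref{thm::overL}, which are already built on $\Z$; and $\MLi \preccurlyeq_\Z \MLC$ is inherited from Theorem~\ref{thm:expres_count_iota}. Hence the only new work is $\MLC \preccurlyeq_\Z \MLi$: encoding counting into $\MLi$ over $\Z$.

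The plan is to define a translation $\tau$ from $\MLC$ to $\MLi$ inductively, acting as the identity on propositional variables, commuting with $\neg$, $\lor$, $\Diamond$, and setting
\[
\tau(\exists_{\geq n} \varphi) = \nu\bigl(\delta_n(\tau(\varphi))\bigr),
\]
where $\delta_1(\psi) = \psi$, $\delta_{k+1}(\psi) = \psi \land \Diamond \delta_k(\psi)$, and $\nu(\psi) = \Diamond \psi \lor @_{\psi} \top \lor @_{\psi \land \neg \Diamond \psi} \top$. A short induction on $n$ shows that over $\Z$ the formula $\delta_n(\psi)$ is satisfied at $w$ iff there exist worlds $w = u_0 < u_1 < \cdots < u_{n-1}$ all satisfying $\psi$; hence $\delta_n(\psi)$ has some witness iff $\psi$ has at least $n$ witnesses. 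The formula $\nu(\psi)$ is designed to simulate the global existential modality $\E\psi$ over $\Z$, so that $\nu(\delta_n(\tau(\varphi)))$ becomes true exactly when $\tau(\varphi)$ has at least $n$ witnesses, matching the semantics of $\exists_{\geq n}\varphi$.

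The crux is verifying $\nu(\psi) \equiv \E \psi$ at every world of every $\Z$-based model. I would do this by case analysis on $S_\psi = \{v : \M, v \models \psi\}$: the cases $|S_\psi| = 1$, $1 < |S_\psi|$ with $S_\psi$ bounded above in $\Z$, and $S_\psi$ unbounded above exhaust the nonempty configurations on $\Z$ and are detected respectively by $@_\psi \top$, by $@_{\psi \land \neg \Diamond \psi} \top$ (the maximum of $S_\psi$ is unique in a linear order and is the unique witness of $\psi \land \neg \Diamond \psi$), and by $\Diamond \psi$ being globally true (since $\Z$ has no maximum, unboundedness above forces $\Diamond \psi$ at every world). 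The remaining case $S_\psi = \emptyset$ falsifies all three disjuncts immediately. The main obstacle is precisely this case analysis: I must ensure that in the unbounded-above case the formula $\psi \land \neg \Diamond \psi$ has no witness at all (otherwise $@_{\psi \land \neg \Diamond \psi} \top$ would be falsely triggered) and that in the bounded case with $|S_\psi| \geq 2$ the maximum is indeed the sole witness of $\psi \land \neg \Diamond \psi$. Once these semantic identities are confirmed, a routine induction on formulas shows that $\M, w \models \varphi \iff \M, w \models \tau(\varphi)$ for every $\MLC$-formula $\varphi$, every $\Z$-based model $\M$ and every world $w$, closing the proof.
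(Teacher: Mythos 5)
Your proposal is correct and takes essentially the same route as the paper: your chain formula $\delta_n$ is exactly the paper's $\psi_n$, and your translation $\nu(\delta_n(\cdot))$ is the paper's $\Diamond \psi_n \lor @_{(\psi_n \land \neg\Diamond\psi_n)}\top$ with one redundant extra disjunct $@_{\delta_n}\top$ (harmless, since a singleton witness set is already caught by the bounded-above case), justified by the same observation that the $n$-th-from-the-top $\psi$-world is the unique witness of $\psi_n \land \neg\Diamond\psi_n$ when the witness set is bounded above, while $\Diamond\psi_n$ covers the unbounded case. The remaining claims ($\HL(@) \prec_\Z \MLi$ and $\MLi \preccurlyeq_\Z \MLC$) are inherited from \Cref{thm::overL,thm:expres_count_iota} exactly as in the paper.
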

\begin{proof}
The fact that  $\HL(@) \prec_\Z \MLi$ follows from the proof of \Cref{thm::overL} as the $\HL$-bisimulation constructed therein is over $\Z$. 
To show  $\MLi \approx_\Z \MLC$ it suffices to prove  $\MLC   \preccurlyeq_\Z \MLi$, as
$\MLi  \preccurlyeq_\Z \MLC$ follows  from \Cref{thm:expres_count_iota}.

To express $\MLC$-formulas in $\MLi$ it will be convenient to introduce, for any $n \in \N$, a formula
$\psi_n$ as  the following abbreviation
$$
\psi_n = \psi \land \Diamond ( \psi \land \Diamond ( \psi \land \dots ) ), \qquad \text{where $\psi$ occurs $n$ times}.
$$ 
We observe that  by the irreflexivity of the accessibility relation over $\Z$ we obtain that 
$\psi_n$ holds 
in all worlds $w_1$ of a model such that there exists a chain ${ w_1 < w_2 < \dots < w_n }$ of (not necesarily consecutive) distinct worlds satisfying $\psi$.

Given an $\MLC$-formula $\varphi$, we let $\varphi'$ be an $\MLi$-formula  obtained by
replacing in $\varphi$ each 
$\exists_{\geq n} \psi$ with $ \Diamond \psi_n \lor @_{ (\psi_n \land  \neg \Diamond \psi_n)} \top $.
To show that $\varphi$ and $\varphi'$ are equivalent over $\Z$
it suffices to show that 
$\exists_{\geq n} \psi$ is equivalent to $ \Diamond \psi_n \lor @_{ (\psi_n \land  \neg \Diamond \psi_n)} \top $.
Indeed, $\exists_{\geq n} \psi$ holds at $w$ if 
either 
(1)
there are $w_1 < \dots < w_n$, all larger than $w$, in which $\psi$ holds
or
(2) there exists the unique $w'$ such that $\psi$ holds in $w'$ and in exactly $n-1$ words larger than $w'$.
The first condition is expressed by $ \Diamond \psi_n$ and the second by $@_{ (\psi_n \land  \neg \Diamond \psi_n)} \top $, so $\exists_{\geq n} \psi$ is equivalent to $ \Diamond \psi_n \lor @_{ (\psi_n \land  \neg \Diamond \psi_n)} \top $.
Note that the disjunct $\Diamond \psi_n$ would  not be needed over finite linear frames.
\qed
\end{proof}

Observe that in the proof above we have shown that over $\Z$  \MLi{} allows us to count the number of occurrences of $p$ in a model, which is impossible over arbitrary frames and over linear frames, as we showed in the proof of \Cref{thm::overL}.










\section{Conclusions}\label{conclusions}

In this paper we have studied the computational complexity and expressive power of modal operators for definite descriptions.
We have shown that after adding Boolean \DD{}s to the basic modal language the satisfiability problem remains \PS{}-complete, so such an extension can be obtained with no computational cost.
However, if we allow for arbitrary \DD{}s, the problem becomes \EXPT{}-complete,
so the computational price is the same as for adding the universal modal operator or counting quantifiers with numbers encoded in unary.
Moreover, we have shown that in this setting \DD{}s provide strictly higher expressive power than the (basic) hybrid machinery, but strictly lower expressive power than counting operators.
The same holds over linear structures, but over integers \DD{}s become as expressive as counting operators.

Regarding the future research directions, it would be interesting to provide
a complexity-wise optimal decision procedure for \MLi{}-satisfiability, for example, using a tableaux systems.
We would also like 
 to study the complexity and expressiveness of well-behaving fragments of modal logic, such as Horn fragments.







\section*{Acknowledgments}
This research is funded by the European Union (ERC, ExtenDD, project number: 101054714). Views and opinions expressed are however those of the authors only and do not necessarily reflect those of the European Union or the European Research Council. Neither the European Union nor the granting authority can be held responsible for them.

\bibliographystyle{splncs04}
\bibliography{DDbiblio}

\end{document}